\newtheorem{prop}{Proposition}
\begin{document}

\title{No nonlocal advantage of quantum coherence beyond quantum instrumentality}
\author{Debasis Mondal}
\thanks{cqtdem@nus.edu.sg} 
\affiliation{Centre for Quantum Technologies, National University of Singapore, 3 Science Drive 2, Singapore 117543}
\author{Jaskaran Singh}
\thanks{jaskaransinghnirankari@iisermohali.ac.in}
\affiliation{Department of Physical Sciences, Indian Institute of Science Education and Research (IISER) Mohali,
	Sector 81 SAS Nagar, Manauli PO 140306 Punjab India.}
\author{Dagomir Kaszlikowski}
\thanks{phykd@nus.edu.sg} 
\affiliation{Centre for Quantum Technologies, National University of Singapore, 3 Science Drive 2, Singapore 117543}
\affiliation{Department of Physics, National University of Singapore, 2 Science Drive 3, 117542 Singapore, Singapore}
\date{\today}


\begin{abstract}
	Recently, it was shown that quantum steerability is stronger than the bound set by the instrumental causal network. This implies, quantum instrumentality cannot simulate EPR-nonlocal correlations completely. In contrast, here we show that quantum instrumentality can indeed simulate EPR correlations completely and uniquely if viewed from the perspective of NAQC. Implication of our result is that the entire set of EPR-correlations can be explained by the LHS model in the instrumental causal background if viewed from the perspective of NAQC.
\end{abstract}

\pacs{03.67.-a, 03.67.Mn}

\maketitle
\section{ Introduction.}
Our fundamental interest in science is to debunk the mystery of our universe by revealing its underlying laws. Quantum mechanics, so far has been the most successful theory of our universe. However, basis of why quantum mechanics works still puzzles us. There have been several attempts to simulate quantum mechanical results using the principle of locality, non-contextuality, determinism or realism and free will to name a few. In an attempt to provide an ontological model for quantum mechanics, a number of no-go theorems including the Bell theorem, Bell-Kochen-Specker theorem and the EPR theorem \cite{jsbell,lhsintro,kochenspecker} were proposed using such theory-independent, physically motivated principles with the hope to single out quantum theory as a primary model to explain nature among the plethora of generalized probabilistic models. 

Failure of these theory-independent no-go theorems led to an important question as to why nature does not allow stronger correlations than what quantum mechanics permits \cite{dam,brunner,popescu2006,brassad}. A number of theory-dependent no-go theorems were also discovered including the no cloning theorem, no deleting theorem etc. In \cite{dam}, it was shown that any post-quantum theory exhibiting stronger non-local correlations may lead to a `computational free lunch', which enables all distributed computations with a trivial amount of communication, i.e. with one bit. So far, there has been a lack of any physical principle, which could prevent such `computational free lunch' and at the same time, also identify quantum theory uniquely. Many partially successful attempts have already been made in this direction proposing several new principles like information causality \cite{pawlowski}, 
local orthogonality \cite{fritz,navascues}, exclusivity \cite{acabelo}, no-causal order \cite{fcosta,deok}, non-trivial communication complexity \cite{dam,brassad} and macroscopic locality \cite{navascues1}. Even though there has been significant progress, a no-go theorem based on a set of physically motivated laws or principles, which could simulate quantum mechanical results uniquely, is still unknown. 

 Recently, ontological models based on the principle of causality have garnered interests and led to the development of the field of quantum causal modeling \cite{fcosta1,chiribella1,chaves1,pienaar,mark,portmann,causallyneutral,henson},  which brought forth a number of fascinating results including the idea of no definite global causal order \cite{fcosta,deok} and its applications \cite{chiribelacomp}. Surprisingly, it has also been experimentally verified recently \cite{rubino}. Therefore, it is now natural to ask  whether the violation of various no-go theorems is because of this stricter notion of ordered causal relation.

To that end, several attempts have already been made to understand quantum nonlocality \cite{chavesbell1,chavesbell2,chavesbell3,rossetbell1,wolfebell1,fritzbell1,fritzbell2}, contextuality \cite{sally} and EPR-nonlocality \cite{nerytaddei} relaxing the stricter notion of ordered causal relation. In this regard, the quantum instrumental causal network is one of the most promising structure of causal models. Quantum instrumental processes are a generalization of their classical counterparts with quantized communication receiving nodes with underlying local hidden variable (LHV) or state (LHS) model and outcome communications (see Fig. \ref{fig1}). 

Instead of the traditional ordered causal network, search for various new quantum causal models has drawn quite a bit of attention \cite{fcosta2,nerytaddei,wolfebell2,chavesbell4} in the last few years and the model, `instrumental causal network' turns out to be the most prominent candidate in this regard. It may be considered as a relaxed version of our day-to-day observation of cause and effect relationship. In the derivation of all the no-go theorems, we assume to live in a world with ordered causal background.  This provides us the opportunity to relax the idea of causal relationship from the existing no-go theorems and swim closer to the direction of singling out the entire set of quantum correlations. 

So far, relaxing the prevalent idea of causal network has not been beneficial in this regard. In fact, in a recent work, device independent instrumental inequality was shown to admit a quantum violation \cite{bonet, chaves33}. On the other hand, EPR-nonlocality also was shown to be stronger than the bound set by the instrumental causal network \cite{nerytaddei} or one sided quantum instrumental network (1SQI).  

In this paper, in an attempt to find a theory-dependent principle or no-go theorem of quantum mechanics, we study steerability of a state 
from the perspective of non-local advantage of quantum coherence (NAQC) \cite{deba1, deba4, deba5}. We derive a set of new tighter steering inequalities based on various coherence measures in the ordered causal background. Violation of these inequalities implies nonlocal advantage of quantum coherence (NAQC) beyond what a single system can achieve. We then derive a similar set of inequalities under 1SQI model \cite{nerytaddei} or in an instrumental causal background. It turns out that unlike quantum steering viewed from the perspective of entropy or uncertainty, NAQC in the ordered causal background is upper bounded by the 1SQI bound. 

Implication of our result is that the entire set of EPR-correlations can be described by the LHS model in the instrumental causal background if viewed from the perspective of NAQC.

 \begin{figure}
\includegraphics[scale=.45]{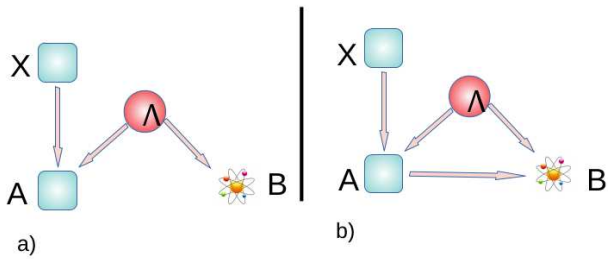}
\caption{The graphical representation of the LHS model and the 1SQI is shown using DAGs. Each node encodes either a classical random variable or a quantum system, and each directed edge a causal influence. Circular nodes denote unobservable variables and square shaped nodes and graphic image of atomic structural nodes denote observable variables. The first type nodes represent classical observables and the second type nodes for quantum observables. It is clear from the presence of the directed edge between Alice ($A$) and Bob ($B$) in the second DAG (b) that the second DAG represents 1SQI and the other one on the left represents LHS model.}
\label{fig1}
\end{figure}

\section{One sided quantum instrumentality(1SQI)}We consider a steering scenario, where Alice prepares a bipartite state and sends a part of the system to Bob, who does not trust her. Alice tries to convince Bob that his state is entangled with hers. To that end, Bob asks Alice to perform certain tasks. Bob believes that there exists an unobservable shared source $\Lambda$ influencing both of them. In local hidden state (LHS) model, Bob thinks that there is no direct causal influence from Alice to Bob but here we relax that assumption and consider that there is indeed a direct causal influence from Alice to Bob as shown in Fig.(\ref{fig1}) b) through a directed acyclic graphs (DAGs). The conditional states of Bob $\rho_{a|x}$ (unnormalized) under 1SQI model are then represented by
\begin{equation}
\rho_{a|x}=\sum_{\lambda}P_{\lambda}p(a|x,\lambda)\rho_{\lambda,a},
\label{1sqieq}
\end{equation}
where $P_{\lambda}$ is a probability distribution over the hidden variables $\lambda$ assigned to the node $\Lambda$, $p(a|x, \lambda)$ is the conditional probability of obtaining outcome $a$ for the measurement setting $x$ and hidden variable $\lambda$ to node $A$ and $\rho_{\lambda,a}$ is the LHS with $\text{Tr}(\rho_{\lambda,a})=1$ assigned to node $B$ by the model. In contrast, the conditional states under the usual LHS model have a representation as
\begin{equation}
\rho_{a|x}=\sum_{\lambda}P_{\lambda}p(a|x,\lambda)\rho_{\lambda}.
\label{lhseq}
\end{equation}

\section{Coherence complementarity relation.}This section is dedicated to the derivation of the non-local advantage of local quantum coherence under the usual ordered causal relation as well as instrumental causal relation. To start with, we first derive the coherence complementarity inequalities based on various measures of quantum coherence for a single qubit state. We consider a general qubit state $\rho=\frac{1}{2}(I_{2}+\vec{r}.\vec{\sigma})$, where $|\vec{r}|\leq 1$ and $\vec{\sigma}\equiv(\sigma_1, \sigma_2, \sigma_3)$ are the Pauli matrices. The coherence of the state when expressed in the eigenbasis of $\sigma_i$, can be expressed by the $l_1$-norm of coherence $(C^{l_1})$ as
\begin{equation}
C_{i}^{l_1}=\sqrt{r_{j}^2+r_k^{2}},
\label{eql1}
\end{equation}
where $i\neq j\neq k$. For the remainder of the article we adopt the notation $i, j, k\in \{1, 2, 3\} $.

Similarly, the relative entropy of coherence $(C^{r})$ with respect to the $i^{th}$ basis is given by
\begin{equation}
C_{i}^r=\mathcal{H}\left(\frac{1+r_i}{2}\right)-3\mathcal{H}\left(\frac{1+|\vec{r}|}{2}\right),
\label{eqr}
\end{equation}
where $\mathcal{H}(x)=-x\log_{2}x-(1-x)\log_{2}(1-x)$ for $0\leq x\leq 1$. Using the fact that $(C_{1}^{\alpha}-C_{2}^{\alpha})^2+(C_{2}^{\alpha}-C_{3}^{\alpha})^2+(C_{3}^{\alpha}-C_{1}^{\alpha})^2\geq 0$, where we take $\alpha\in \{l_1, r\}$, the expressions in Eq.~(\ref{eql1}) and Eq.~(\ref{eqr}) reduce to
\begin{equation}
C_{1}^{\alpha}C_2^{\alpha}+C_2^{\alpha}C_3^{\alpha}+C_3^{\alpha}C_1^{\alpha} \leq \sum_{i=1}^{3}C_i^{\alpha^2}\leq \Omega_\alpha,
\label{cohcomp}
\end{equation}
where $\Omega_{l_1}=\Omega_{r}=2$ for an arbitrary qubit state. Using a similar approach it is possible to extend the inequality for arbitrary dimension (see the supplemental material \cite{supple}).

\section{Non-local advantage of quantum coherence.} We now derive a new NAQC inequality under both ordered and instrumental causal networks. Without loss of generality and for simplicity, we limit our analysis within the regime of two-qubit states, while the results can be easily extended for any general bipartite states (see supplemental material \cite{supple} for general proof). We consider a two qubit bipartite state $\rho_{ab}$ prepared by Alice and shared with Bob. We also assume that the conditional states of Bob admit an LHS or 1SQI model as given by Eq.~(\ref{lhseq}) and Eq. (\ref{1sqieq}) respectively. If Bob measures certain properties of his states such as entropy, uncertainty, it has been shown to violate the steering inequalities \cite{ricardo,anandam, sourodip, walborn}. It has also been shown to violate the 1SQI inequality based on semi-definite programming (SDP) \cite{nerytaddei}. The quantities or the properties considered (uncertainty and entropy) so far has a classical counterpart. Both the uncertainty as well as the entropy of the state of Bob has contribution from classical mixedness of the state. Classical mixing and thermal noise directly contribute to such quantities and thus, affects the non-locality. On the other hand, quantum coherence is the absence of classical mixing and thermal noise \cite{uttamsingh} and thus, noise plays no role in the non-locality measured based on a coherence dependent quantity (less robust under noise). A steering inequality based on quantum coherence thus naturally provides a different view of the situation. It also depicts how a purely quantum resource (quantum correlation) affects another quantum resource (coherence). In the following, we show that unlike the traditional inequalities based on uncertainty or entropy, a 1SQI inequality based on quantum coherence can indeed single out the EPR-correlation viewed via NAQC. We start with the sum of square of average local quantum coherence of Bob's state in the mutually unbiased bases, i.e. for a two-qubit scenario,
\begin{equation}
S:=\sum^1_{a,b=0}\sum_{i\neq j\neq k}p(a|i)C^{\alpha}_{k}(\rho'_{a|i})p(b|j)C^{\alpha}_{k}(\rho'_{b|j}),
\label{quantity}
\end{equation}
where $\rho'_{c|x}=\frac{\rho_{c|x}}{\text{Tr}(\rho_{c|x})}=\frac{\rho_{c|x}}{\sum_{\lambda}P_{\lambda}p(c|x,\lambda)}$ is the normalized conditional state of Bob and $p(c|x)=\text{Tr}(\rho_{c|x})=\sum_{\lambda}P_{\lambda}p(c|x, \lambda)$ is the probability of being in the state. As we next show, the quantity $S$ has a nontrivial bound under both the ordered and instrumental causal network. We derive bounds for both the cases below.
\begin{prop}
Under LHS model and ordered causal network, the quantity $S$ that Bob measures on his particle is bounded as
\begin{equation}
S\overset{\text{LHS}}{\leq} 2\Omega_{\alpha}.
\label{ineq2}
  \end{equation}
  \end{prop}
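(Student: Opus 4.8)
The plan is to exploit the product structure of $S$ together with the defining feature of the LHS model, namely that the hidden states $\rho_\lambda$ in Eq.~(\ref{lhseq}) carry no dependence on Alice's outcome $a$ or setting $x$. First I would rewrite the normalized conditional state as a genuine ensemble average: since $\rho'_{a|i}=\sum_\lambda P_\lambda p(a|i,\lambda)\rho_\lambda / p(a|i)$ with $p(a|i)=\sum_\lambda P_\lambda p(a|i,\lambda)$, the state $\rho'_{a|i}$ is a convex mixture of the \emph{fixed} states $\{\rho_\lambda\}$ with weights $P_\lambda p(a|i,\lambda)/p(a|i)$. I would then invoke convexity of the coherence measure $C^\alpha_k$ (which holds for both $\alpha=l_1$ and $\alpha=r$) to obtain the pointwise bound $p(a|i)\,C^\alpha_k(\rho'_{a|i})\le \sum_\lambda P_\lambda p(a|i,\lambda)\,C^\alpha_k(\rho_\lambda)$.

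The decisive simplification comes next. Summing this inequality over the outcome $a$ and using normalization $\sum_a p(a|i,\lambda)=1$ collapses the outcome dependence entirely, giving $\sum_a p(a|i)\,C^\alpha_k(\rho'_{a|i})\le \sum_\lambda P_\lambda\,C^\alpha_k(\rho_\lambda)=:\bar c_k$, a quantity that depends only on $k$ and not on the measurement setting $i$. Because the summand of $S$ factorizes into an $(a,i)$-dependent piece times a $(b,j)$-dependent piece, one has $S=\sum_{i\neq j\neq k}\big[\sum_a p(a|i)C^\alpha_k(\rho'_{a|i})\big]\big[\sum_b p(b|j)C^\alpha_k(\rho'_{b|j})\big]$, and each bracket is nonnegative and bounded above by $\bar c_k$. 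Counting the ordered triples of distinct indices (for each fixed $k$ there are exactly two admissible pairs $(i,j)$, namely the two remaining indices in either order) then yields $S\le 2\sum_{k=1}^3 \bar c_k^{\,2}$.

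It remains to bound $\sum_k \bar c_k^{\,2}$ by $\Omega_\alpha$. Here I would apply Jensen's inequality to the convex map $x\mapsto x^2$, giving $\bar c_k^{\,2}=\big(\sum_\lambda P_\lambda C^\alpha_k(\rho_\lambda)\big)^2\le \sum_\lambda P_\lambda\, C^\alpha_k(\rho_\lambda)^2$, and then exchange the sums so that the coherence complementarity relation~(\ref{cohcomp}) applies to each fixed qubit state: $\sum_k \bar c_k^{\,2}\le \sum_\lambda P_\lambda \sum_k C^\alpha_k(\rho_\lambda)^2\le \sum_\lambda P_\lambda\,\Omega_\alpha=\Omega_\alpha$. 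Combining the two displays gives $S\le 2\Omega_\alpha$. The step I expect to be the crux is the outcome-summation that removes the $a$-dependence: it works precisely because $\rho_\lambda$ is outcome-independent in the LHS model, and it is exactly this step that must fail in the 1SQI case (where $\rho_{\lambda,a}$ depends on $a$), which is what ultimately separates the two bounds. A secondary point to verify carefully is the convexity of $C^\alpha_k$ for the relative-entropy measure, together with the nonnegativity of each bracketed factor, which is what legitimizes replacing each product of two brackets by $\bar c_k^{\,2}$.
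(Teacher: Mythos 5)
Your proof is correct and follows essentially the same route as the paper's: convexity of $C^\alpha_k$ under the LHS decomposition, the count of two admissible ordered pairs $(i,j)$ per fixed $k$, normalization over Alice's outcomes, a quadratic inequality, and finally the coherence complementarity relation~(\ref{cohcomp}) applied to each hidden state $\rho_\lambda$. The only difference is organizational — you sum over outcomes first to isolate the setting-independent quantity $\bar c_k$ and then apply Jensen, whereas the paper applies $xy\le\tfrac{x^2+y^2}{2}$ inside the double $\lambda,\lambda'$ sum before collapsing the outcome sums — but both chains pass through the identical intermediate bound $2\sum_k\sum_\lambda P_\lambda C^{\alpha\,2}_k(\rho_\lambda)$.
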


\begin{proof}
	A proof of the above under a LHS model is outlined  below.

\begin{widetext}
\begin{eqnarray}
S&=&\sum^1_{a,b=0}\sum_{i\neq j\neq k}p(a|i)p(b|i)C^{\alpha}_{k}(\rho'_{a|i})C^{\alpha}_{k}(\rho'_{b|j})\overset{LHS}{\leq} \frac{1}{2}\underset{\lambda,\lambda'}{\sum^1_{a,b=0}\sum_{i\neq j\neq k}} P_{\lambda}P_{\lambda'}p(a|i,\lambda)p(b|j,\lambda')\left(C^{\alpha^2}_k(\rho'_{\lambda})+C^{\alpha^2}_k(\rho'_{\lambda'})\right)\nonumber\\&=&\frac{1}{2}\underset{\lambda,\lambda'}{\sum^1_{a,b=0}\sum_{k}}P_{\lambda}P_{\lambda'}(p(a|\mathcal{K}_1^k,\lambda)p(b|\mathcal{K}_2^k,\lambda')+p(b|\mathcal{K}_1^k,\lambda')p(a|\mathcal{K}_2^k,\lambda))\left(C^{\alpha^2}_k(\rho'_{\lambda})+C^{\alpha^2}_k(\rho'_{\lambda'})\right)\nonumber\\&\leq &
\underset{\lambda,\lambda'}{\sum_{k}}P_{\lambda}P_{\lambda'}\left(C^{\alpha^2}_k(\rho'_{\lambda})+C^{\alpha^2}_k(\rho'_{\lambda'})\right)
\leq 2\sum_{\lambda,\lambda'}P_{\lambda}P_{\lambda'}\Omega_{\alpha}
= 2\Omega_{\alpha}
\label{proof5}
\end{eqnarray}
\end{widetext}

where $\mathcal{K}_n^k=\text{Mod}(k-1+n,3)+1$. The first inequality in Eq.~(\ref{proof5}) comes from the fact that conditional states have representations as given by Eq.~(\ref{lhseq}) and the fact that coherence does not increase under classical mixing of states, i.e., for a state $\rho=\sum_{i}p_{i}\rho_{i}$ such that $\sum_{i}p_{i}=1$, $C^{\alpha}(\rho)
\leq\sum_{i}p_{i}C^{\alpha}(\rho_{i})$. We have also used the fact that for any real numbers $x$ and $y$, $xy\leq\frac{x^2+y^2}{2}$. 


\end{proof}

\begin{figure}
	\includegraphics[scale=1]{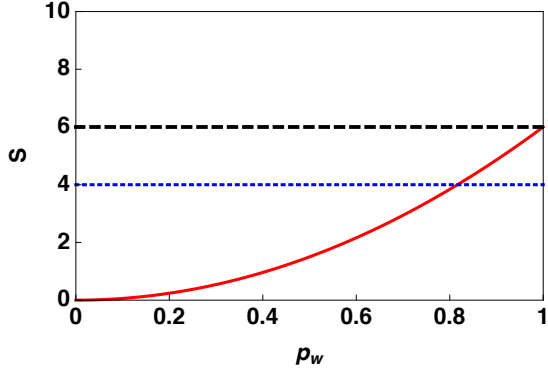}
	\caption{We plot $S$ (solid, red) with the varying values of $p_w$ for the Werner state~(\ref{werner}) using the $l_1$-norm measure of quantum coherence. We use Pauli measurements in arbitrary directions and optimize over directions ($\theta$ and $\phi$) for maximum value of $S$ as detailed in \cite{supple}.
	$S$ violates the bound (dotted, blue) given by the inequality in Eq.~(\ref{ineq2}) but does not violate the bound (dashed, black) given by the inequality in Eq.~(\ref{ineq1}). }
	\label{s2}
\end{figure}


Violation of the above inequality for any quantum state not only implies that the state is steerable but also shows that Bob can achieve the nonlocal advantage of quantum coherence beyond what could have been possible without the intervention of Alice nonlocally. In \cite{deba4}, a set of steering complementarity relations were derived. Here we show a set of similar complementarity relations in the supplemental material \cite{supple}. Moreover, we also show that the inequality in Eq. (\ref{ineq2}) is tight, i.e., there exist a state with LHS model ($\rho_{ab}=|0\rangle\langle 0|\otimes|+\rangle\langle +|$  for example), which  can achieve the bound. 



In the next section, we focus on deriving a similar bound on the quantity $S$ under the 1SQI model with outcome communications. 
\begin{prop}
	If Bob assumes that his conditional states admit descriptions as given by 1SQI model in Eq.~(\ref{1sqieq}) and measures the quantity $S$ on his states, it must be bounded by
	\begin{equation}
	S\overset{1SQI}{\leq} 3\Omega_{\alpha}.
	\label{ineq1}
	\end{equation}
\end{prop}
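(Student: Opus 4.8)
The plan is to mirror the structure of the Proposition~1 proof, diverging only at the step where the outcome-dependence of the instrumental hidden states changes the counting. First I would insert the 1SQI representation Eq.~(\ref{1sqieq}) into each local factor and use convexity of the coherence measure under classical mixing. Writing $\rho'_{a|x}=\sum_\lambda\frac{P_\lambda p(a|x,\lambda)}{p(a|x)}\rho_{\lambda,a}$, convexity gives $p(a|x)C^\alpha_k(\rho'_{a|x})\leq\sum_\lambda P_\lambda p(a|x,\lambda)C^\alpha_k(\rho_{\lambda,a})$. Multiplying the two local factors appearing in each term of $S$ and applying $xy\leq\tfrac{1}{2}(x^2+y^2)$ exactly as in Eq.~(\ref{proof5}) produces a bound of the same shape as the LHS one, except that the coherences now read $C^{\alpha^2}_k(\rho_{\lambda,a})$ and $C^{\alpha^2}_k(\rho_{\lambda',b})$, i.e.\ they carry the communicated outcomes $a$ and $b$.

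Next I would regroup the $i\neq j\neq k$ summation into the two orderings of the remaining pair $\{\mathcal{K}_1^k,\mathcal{K}_2^k\}$ and carry out the sum over $a,b$. This is where the departure from Proposition~1 occurs: in the LHS case the coherence is outcome-independent, so each conditional probability sums to unity and the two settings collapse to the clean weight $2$, whereas here the sum over $a$ cannot pass through $C^{\alpha^2}_k(\rho_{\lambda,a})$. After using $\sum_{\lambda'}P_{\lambda'}=1$ and the $\lambda\leftrightarrow\lambda'$ symmetry, the bound collapses to $S\leq\sum_\lambda P_\lambda\,\mathcal{T}_\lambda$ with
\[
\mathcal{T}_\lambda=\sum_k\sum_a\left(p(a|\mathcal{K}_1^k,\lambda)+p(a|\mathcal{K}_2^k,\lambda)\right)C^{\alpha^2}_k(\rho_{\lambda,a}),
\]
so it suffices to establish $\mathcal{T}_\lambda\leq3\Omega_\alpha$ for each $\lambda$.

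The crux is this last inequality. Since $\{\mathcal{K}_1^k,\mathcal{K}_2^k\}=\{1,2,3\}\setminus\{k\}$, I would rewrite the probability weight as $p(a|\mathcal{K}_1^k,\lambda)+p(a|\mathcal{K}_2^k,\lambda)=T^a_\lambda-p(a|k,\lambda)$, where $T^a_\lambda=\sum_{x=1}^3 p(a|x,\lambda)$. Dropping the manifestly nonnegative term $\sum_{k,a}p(a|k,\lambda)C^{\alpha^2}_k(\rho_{\lambda,a})$ gives $\mathcal{T}_\lambda\leq\sum_a T^a_\lambda\sum_k C^{\alpha^2}_k(\rho_{\lambda,a})$. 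I would then apply the coherence complementarity bound $\sum_k C^{\alpha^2}_k\leq\Omega_\alpha$ from Eq.~(\ref{cohcomp}) \emph{per outcome state} $\rho_{\lambda,a}$, and finish with normalization $\sum_a T^a_\lambda=\sum_{x=1}^3\sum_a p(a|x,\lambda)=3$, yielding $\mathcal{T}_\lambda\leq3\Omega_\alpha$ and hence $S\overset{1SQI}{\leq}3\Omega_\alpha$.

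The structural origin of the extra unit relative to the $2\Omega_\alpha$ of Proposition~1 is transparent in this formulation: under LHS the outcome sum acts on the conditional probabilities alone and fixes the weight of the two admissible settings at $2$, whereas under 1SQI the outcome is entangled with Bob's state through $\rho_{\lambda,a}$, so the sum instead counts the full setting weight $\sum_{a,x}p(a|x,\lambda)=3$. I expect the main obstacle to be precisely the bookkeeping at this junction: the complementarity relation must be applied separately for each outcome $a$, before the sum over $a$ rather than after it, since the tempting interchange would silently collapse the bound back to $2\Omega_\alpha$ and erase the instrumental gain that the proposition is meant to exhibit.
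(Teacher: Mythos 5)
Your proof is correct, but it reaches the bound by a genuinely different route than the paper at the decisive step. The paper applies $xy\leq\tfrac{1}{2}(x^2+y^2)$ to the \emph{full} local factors $p(a|i,\lambda)C_k^{\alpha}(\rho_{a,\lambda})$, which produces squared probabilities and forces the introduction of $F_k(a|\lambda)=p(a|\mathcal{K}_1^k,\lambda)^2+p(a|\mathcal{K}_2^k,\lambda)^2$ together with the separate lemma $\sum_a F_k(a|\lambda)\leq\tfrac{3}{2}$ of Eq.~(\ref{f}) --- a constraint that holds for qubit statistics in mutually unbiased bases (the combination $\tfrac{1}{2}\cdot\tfrac{3}{2}\cdot 2\Omega_\alpha$ per term then yields $3\Omega_\alpha$). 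You instead apply the elementary inequality only to the pair of coherences, keep the conditional probabilities linear, and close the argument with the rewriting $p(a|\mathcal{K}_1^k,\lambda)+p(a|\mathcal{K}_2^k,\lambda)=T^a_\lambda-p(a|k,\lambda)\leq T^a_\lambda$, the per-outcome complementarity bound $\sum_k C_k^{\alpha^2}(\rho_{\lambda,a})\leq\Omega_\alpha$, and the normalization $\sum_a T^a_\lambda=3$. What your version buys is robustness and transparency: it needs nothing beyond $\sum_a p(a|x,\lambda)=1$ and the complementarity relation~(\ref{cohcomp}), so it holds for arbitrary (not necessarily quantum) response functions $p(a|x,\lambda)$ at Alice's node, whereas the paper's $\tfrac{3}{2}$ bound on $\sum_a F_k(a|\lambda)$ is specific to qubit measurement statistics; your closing remark correctly isolates the structural origin of the gap between $2\Omega_\alpha$ and $3\Omega_\alpha$, namely that the outcome label is carried into Bob's hidden state so the complementarity relation must be applied before the sum over $a$. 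The one thing you lose relative to the paper's bookkeeping is that the $F_k$ formulation is the one the authors reuse verbatim in the qudit generalization of the supplement, so if you intend to extend your argument to prime-power dimension $d$ you would need to redo the counting of the $i\neq j\neq k$ sum there ($d(d-1)$ ordered pairs per $k$) rather than import their Eq.~(F3) directly.
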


\begin{proof}
Proof of this inequality follows along the same line of approach as before.

\begin{eqnarray}
	S&=&\sum^1_{a,b=0}\sum_{i\neq j\neq k}p(a|i)p(b|j)C^{\alpha}_{k}(\rho'_{a|i})C^{\alpha}_{k}(\rho'_{b|j})\nonumber\\&\overset{1SQI}{\leq} &\underset{\lambda, \lambda'}{\sum^1_{a,b=0}\sum_{i\neq j\neq k}}\begin{aligned}P_{\lambda}p(a|i, \lambda)&P_{\lambda'}p(b|j, \lambda')\\&C^{\alpha}_{k}(\rho_{a,\lambda})C^{\alpha}_{k}(\rho_{b,\lambda'})\end{aligned}\nonumber\\
	&\leq &\frac{1}{2}\underset{\lambda, \lambda'}{\sum_{i\neq k}}P_\lambda P_{\lambda'} \left(\begin{aligned} &\sum_{a, b=0}^{1}\bigg{(}p(a|i,\lambda)C^{\alpha}_{k}(\rho_{b,\lambda'})\bigg{)}^2\\& + \sum_{a, b=0}^{1}\bigg{(}p(b|i,\lambda')C^{\alpha}_{k}(\rho_{a,\lambda'})\bigg{)}^2\end{aligned}\right)\nonumber\\&=&\frac{1}{2}\underset{\lambda,\lambda'}{\sum^1_{a,b=0}\sum_{k}}P_{\lambda}P_{\lambda'}\left(\begin{aligned} &F_k(a|\lambda)C^{\alpha^2}_k(\rho'_{b, \lambda'})\\&+F_k(b|\lambda')C^{\alpha^2}_k(\rho'_{a, \lambda})\end{aligned}\right),
\label{proof1}
\end{eqnarray}

\begin{figure}
	\includegraphics[scale=1]{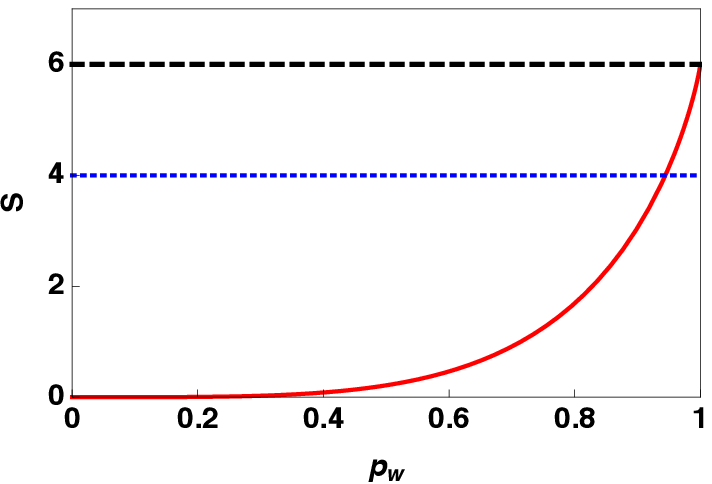}
	\caption{We plot $S$ (solid, red) with the varying values of $p_w$ for the Werner state~(\ref{werner}) using the relative entropy measure of quantum coherence. We use Pauli measurements in arbitrary directions and optimize over directions ($\theta$ and $\phi$) for maximum value of $S$ \cite{supple}. $S$ violates the bound (dotted, blue) given by the inequality in Eq. (\ref{ineq2})   but does not violate the bound (dashed, black) given by the inequality in Eq. (\ref{ineq1}).}
	\label{s1}
\end{figure}

where $F_k(a|\lambda) = p(a|\mathcal{K}_1^k,\lambda)^2+p(a|\mathcal{K}_2^k,\lambda)^2$. As before, it can be shown \cite{supple} that for an arbitrary qubit state,
\begin{equation}
\sum_{a=0}^{1}F_k(a|\lambda)\leq \frac{3}{2} \quad \forall k.
\label{f}
\end{equation}
Plugging Eqn.~(\ref{f}) in (\ref{proof1}), we get,
\begin{eqnarray}
S&=&\sum^1_{a,b=0}\sum_{i\neq j\neq k}p(a|i)p(b|j)C^{\alpha}_{k}(\rho'_{a|i})C^{\alpha}_{k}(\rho'_{b|j})\nonumber\\&\leq& 3\Omega_{\alpha}
\label{1sqibound}
\end{eqnarray}
As before, in the first inequality in Eq. (\ref{proof1}), we use the 1SQI model as given in Eq.~(\ref{1sqieq}) and the fact that coherence does not increase under classical mixing. The second inequality is a consequence of the fact that for any two real numbers $x$ and $y$, $xy\leq\frac{x^2+y^2}{2}$ (see the generalization of the bound in section $F$ of \cite{supple}). In the last inequality in Eq. (\ref{1sqibound}), we use the coherence complementarity relation as given in Eq. (\ref{cohcomp}). A generalized form of coherence complementarity relationship for states in arbitrary dimensions can be used to generalize the above proof for two qudits. Furthermore, in the supplementary material~\cite{supple}, we show that the above inequality is also tight.
\end{proof}

For example for both the cases of ordered and instrumental causal networks, we consider Werner states, given by 
\begin{equation}
\rho_{ab}=\frac{1-p_w}{4}\mathds{1}_4+p_w|\psi_{ab}\rangle\langle\psi_{ab}|,
\label{werner}
\end{equation}
where $0\leq p_w \leq 1$ and $|\psi_{ab}\rangle$ is the Bell singlet state. We plot the behavior of $S$ with respect to $p_w$ in Fig.~\ref{s2} and Fig.~\ref{s1} respectively using both $l_1$-norm and relative entropy of coherence as measures of coherence. From the plots, we find that the quantity $S$ violates the bound set by the ordered causal network for $p_w>0.816$ for the $l_1$-norm of coherence and $p_w>0.944$ for the relative entropy of coherence.

On the other hand,  the inequality in Eq. (\ref{ineq1}) is not violated by the Werner state for any range of $p_w$ for the $l_1$-norm measure of quantum coherence in Fig. (\ref{s2}) as well as relative entropy measure of quantum coherence in Fig. (\ref{s1}). One can in fact show that 

\begin{prop}
no two-qubit state can violate the bound set by the 1SQI inequality as given in Eq. (\ref{ineq1}), i.e.,
\begin{equation}
\max_{\rho_{ab}\in \mathcal{L}_{4}^{+}} S\leq 3\Omega_{\alpha}.
\label{ineq2}
  \end{equation}
  \end{prop}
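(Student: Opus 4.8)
The plan is to bound $S$ directly for an arbitrary two-qubit state $\rho_{ab}\in\mathcal{L}_4^+$, without passing through any explicit hidden-variable decomposition; the $3\Omega_\alpha$ ceiling will then drop out from exactly the two ingredients already used in Proposition 2, namely the elementary inequality $xy\le(x^2+y^2)/2$ and the coherence complementarity relation of Eq.~(\ref{cohcomp}). First I would exploit the index structure of $S$. Since the summation runs over $i\neq j\neq k$, fixing $k$ forces $\{i,j\}$ to be the complementary pair $\{\mathcal{K}_1^k,\mathcal{K}_2^k\}$, so only its two orderings survive. Introducing the setting-averaged conditional $k$-coherence $Q_k^x:=\sum_a p(a|x)\,C_k^\alpha(\rho'_{a|x})$, the double outcome sum factorises and one obtains the compact form $S=2\sum_k Q_k^{\mathcal{K}_1^k}Q_k^{\mathcal{K}_2^k}$.

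Next I would apply $2xy\le x^2+y^2$ to each product, giving $S\le\sum_k\big[(Q_k^{\mathcal{K}_1^k})^2+(Q_k^{\mathcal{K}_2^k})^2\big]=\sum_{x}\sum_{k\neq x}(Q_k^x)^2$, where the right-hand side is regrouped so that, for each measurement setting $x$, every coherence direction $k\neq x$ contributes exactly one squared term. For fixed $x$ I would then push the square inside the outcome average using convexity of $t\mapsto t^2$ together with $\sum_a p(a|x)=1$, obtaining $(Q_k^x)^2\le\sum_a p(a|x)\,\big(C_k^\alpha(\rho'_{a|x})\big)^2$. Summing over $k\neq x$ and invoking Eq.~(\ref{cohcomp}) on each legitimate single-qubit conditional state $\rho'_{a|x}$, i.e. $\sum_{k\neq x}(C_k^\alpha)^2\le\sum_k(C_k^\alpha)^2\le\Omega_\alpha$, collapses the inner sum to $\sum_{k\neq x}(Q_k^x)^2\le\Omega_\alpha$ for every $x$. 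Finally, summing over the three settings $x\in\{1,2,3\}$ yields $S\le 3\Omega_\alpha$, as claimed.

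The step I expect to be the genuine decision point, rather than a calculation, is recognising that this direct route is available at all: one is tempted instead to argue that every two-qubit steering assemblage admits an explicit decomposition of the form~(\ref{1sqieq}) and then simply quote Proposition 2, which would force one to construct the hidden variable, the outcome distribution and the local states and to verify realisability. The observation that short-circuits this is that the bound never actually uses the microscopic model: the only quantum input needed is that each $\rho'_{a|x}$ is a bona fide qubit density matrix, so that Eq.~(\ref{cohcomp}) applies, while the normalisation $\sum_a p(a|x)=\mathrm{Tr}(\rho_B)=1$ supplies the rest. The one point demanding care is the combinatorial bookkeeping in the first two steps, since it is precisely the counting of how often each $(Q_k^x)^2$ appears---once per admissible setting---that produces the factor $3\Omega_\alpha$ and not a larger constant; a miscount in the index ranges would yield the wrong ceiling.
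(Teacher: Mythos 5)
Your proof is correct. It shares the paper's overall strategy --- a direct bound on $S$ for an arbitrary two-qubit state using $xy\le\tfrac{1}{2}(x^2+y^2)$, with no hidden-variable decomposition --- but the final bounding step is genuinely different. The paper applies the AM--GM step to the individual products $C_k^\alpha(\rho'_{a|i})C_k^\alpha(\rho'_{b|j})$ before summing over outcomes, and then closes the argument with the crude facts that coherence of a qubit is at most one and that $\sum_a p(a|i)=1$, arriving at $S\le 6$; this equals $3\Omega_\alpha$ only because $\Omega_{l_1}=\Omega_r=2$ happens to coincide with three times the squared maximal coherence. You instead collapse the outcome sums first into $Q_k^x=\sum_a p(a|x)C_k^\alpha(\rho'_{a|x})$, apply AM--GM to $Q_k^{\mathcal{K}_1^k}Q_k^{\mathcal{K}_2^k}$, push the square inside by convexity, and invoke the coherence complementarity relation of Eq.~(\ref{cohcomp}) on each conditional state. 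Your route costs one extra ingredient (Jensen's inequality) but buys a structural derivation of the constant: the factor $3$ is the number of settings and $\Omega_\alpha$ enters through the complementarity relation itself, so the bound is not a numerical coincidence and transfers more transparently to the qudit generalization sketched in the supplement. Your combinatorial bookkeeping (two orderings of $(i,j)$ per $k$, hence each pair $(x,k)$ with $k\neq x$ contributing one squared term) is correct.
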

\begin{proof}
\begin{eqnarray}\label{eq:snew}
S&=&\sum_{a,b=0}^{1}\sum_{i\neq j\neq k}p(a|i)p(b|j)C_k^\alpha(\rho'_{a|i})C_k^\alpha(\rho'_{b|j})\nonumber\\
&\leq & \frac{1}{2}\sum_{a,b=0}^{1}\sum_{i\neq j\neq k}p(a|i)p(b|j)(C_k^{\alpha^2}(\rho'_{a|i})+C_k^{\alpha^2}(\rho'_{b|j}))\nonumber\\
&=&\frac{1}{2}\sum_{i\neq j\neq k}\left(\sum_{a}p(a|i)C_k^{\alpha^2}(\rho'_{a|i})+\sum_{b}p(b|j)C_k^{\alpha^2}(\rho'_{b|j})\right)\nonumber\\
&= & \frac{1}{2}\sum_{i\neq j\neq k}\left(S^{(1)}_{a,i,k}+S^{(2)}_{b,j,k}\right),
\end{eqnarray}
\end{proof}
where $S^{(1)}_{a,i,k} = \sum_a p(a|i)C_k^{\alpha^2}(\rho'_{a|i})$ and $S^{(2)}_{b,j,k} = \sum_b p(b|j)C_k^{\alpha^2}(\rho'_{b|j})$.  Eq.~(\ref{eq:snew}) can be evaluated term wise as,
\begin{eqnarray}
\sum_{\i \neq j\neq k}S^{(1)}_{a,i,k} &=& \sum_{a,i\neq j\neq k}p(a|i)C_k^{\alpha^2}(\rho'_{a|i})\nonumber\\
&=&\sum_{i\neq j \neq k}p(0|i)C_k^{\alpha^2}(\rho'_{0|i})+p(1|i)C_k^{\alpha^2}(\rho'_{1|i})\nonumber\\
&\leq & 6,
\end{eqnarray}
where the last inequality is due to the fact the maximum value of coherence is one and for any probability distribution $p(x)$ and any positive function $f(x)$, $\sum_{x}p(x)f(x)\leq\sum_{x}f(x)$. After a similar analysis for $\sum_{i\neq j\neq k}S^{(2)}_{b,j,k}$, we get
\begin{equation}
S\leq 6,
\end{equation}
which concludes the proof that no two qubit state can violate the bound $3\Omega_{\alpha}$. 

This can again be generalized to arbitrary two-qudit states by appropriately choosing the generalized coherence complementarity relationship~(\ref{cohcomp}) (see the supplemental material \cite{supple}). We come to the same conclusion even after studying the general two-qudit state as shown in the supplemental material \citep{supple}.

\section{ Conclusions and discussions.}It was shown that there exist quantum states, which exibit stronger EPR-nonlocality than the bound set by 1SQI (\cite{nerytaddei}), i.e., quantum steering cannot be explained by 1SQI model. However, in this article, we start with a new and stronger steering inequality based on local quantum coherence \cite{deba1, deba4} under the ordered causal network. We observe violation of the inequality as shown in Fig. (\ref{s2}) and (\ref{s1}) and term the phenomena as NAQC. We derive a similar bound on the quantity under instrumental causal network and outcome communication. Like \cite{nerytaddei}, a violation of the inequality in Eq. (\ref{ineq1}) certifies that NAQC beyond quantum instrumentality is possible just like quantum steerability. However, we observe that although quantum steering in general can be more stronger than what quantum instrumentality allows, NAQC is in turn upper bounded by the 1SQI bound. The new inequality for instrumental causal network is never violated by the NAQC for any state. This implies that the entire set of EPR-correlations can be explained by the LHS model in the instrumental causal background if viewed from the perspective of NAQC.

There have been several attempts to single out quantum or more precisely, physical correlations based on different physical principles or no-go theorems.  In this paper, we show that although 1SQI bound cannot single out quantum correlations when viewed from the perspective of violation of local quantum analogue of classical properties, but in principle, can identify the correlations when viewed from the perspective of NAQC. While identification of physical correlations using different principles or no-go theorems is considered to be a non-trivial task, we believe, our efforts indeed advance us significantly in this particular direction.

Since we have shown that an inequality based on quantum coherence fits just perfectly for arbitrary dimensions, such that the 1SQI bound uniquely singles out NAQC correlations, one of the immediate questions naturally arises: Why does quantum coherence play such an important role?  For the sake of arguments, even if we consider that we stumbled upon a pair of two perfect quantities, namely EPR correlation and coherence, by a mere serendipitous coincidence, we cannot ignore that they are a perfect match for each other, which leads to another immediate question: what makes them the perfect match? We believe, quantum coherence does not play any role but rather it is transition probabilities of a state, which makes a perfect pair with the quantum correlations, wheres quantum coherence is just a function of these probabilities. Thus, in future, it will be really an important task to investigate the pairs more closely.

%
%

{\bf Acknowledgements---}D.M. would like to acknowledge the support from the
National Research Foundation. JS would like to acknowledge funding from UGC, India. D.K. is supported by the
National Research Foundation and the Ministry of Education in Singapore through the Tier 3 MOE2012-T3-1-
009 Grant Random numbers from quantum processes.

\newpage
\clearpage
\onecolumngrid
\section{Supplemental Material}
\newcounter{defcounter}
\setcounter{defcounter}{0}
\newtheorem{defn}{Definition}
\newcommand{\comb}[2]{{}^{#1}\mathrm{C}_{#2}}
\newenvironment{myequationA}{%
\addtocounter{equation}{-1}
\refstepcounter{defcounter}
\renewcommand\theequation{A\thedefcounter}
\begin{equation}}
{\end{equation}}
\newenvironment{myequationB}{%
\addtocounter{equation}{-1}
\refstepcounter{defcounter}
\renewcommand\theequation{B\thedefcounter}
\begin{equation}}
{\end{equation}}
\newenvironment{myequationC}{%
\addtocounter{equation}{-1}
\refstepcounter{defcounter}
\renewcommand\theequation{C\thedefcounter}
\begin{equation}}
{\end{equation}}
\newenvironment{myequationD}{%
\addtocounter{equation}{-1}
\refstepcounter{defcounter}
\renewcommand\theequation{D\thedefcounter}
\begin{equation}}
{\end{equation}}
\newenvironment{myequationE}{%
\addtocounter{equation}{-1}
\refstepcounter{defcounter}
\renewcommand\theequation{E\thedefcounter}
\begin{equation}}
{\end{equation}}
\newenvironment{myequationF}{%
\addtocounter{equation}{-1}
\refstepcounter{defcounter}
\renewcommand\theequation{F\thedefcounter}
\begin{equation}}
{\end{equation}}
\newenvironment{myequationG}{%
\addtocounter{equation}{-1}
\refstepcounter{defcounter}
\renewcommand\theequation{G\thedefcounter}
\begin{equation}}
{\end{equation}}
\newenvironment{myequationH}{%
\addtocounter{equation}{-1}
\refstepcounter{defcounter}
\renewcommand\theequation{H\thedefcounter}
\begin{equation}}
{\end{equation}}
\newenvironment{myeqnarrayA}{%
\addtocounter{equation}{-1}
\refstepcounter{defcounter}
\renewcommand\theequation{A\thedefcounter}
\begin{eqnarray}}
{\end{eqnarray}}
\newenvironment{myeqnarrayB}{%
\addtocounter{equation}{-1}
\refstepcounter{defcounter}
\renewcommand\theequation{B\thedefcounter}
\begin{eqnarray}}
{\end{eqnarray}}
\newenvironment{myeqnarrayC}{%
\addtocounter{equation}{-1}
\refstepcounter{defcounter}
\renewcommand\theequation{C\thedefcounter}
\begin{eqnarray}}
{\end{eqnarray}}
\newenvironment{myeqnarrayD}{%
\addtocounter{equation}{-1}
\refstepcounter{defcounter}
\renewcommand\theequation{D\thedefcounter}
\begin{eqnarray}}
{\end{eqnarray}}
\newenvironment{myeqnarrayE}{%
\addtocounter{equation}{-1}
\refstepcounter{defcounter}
\renewcommand\theequation{E\thedefcounter}
\begin{eqnarray}}
{\end{eqnarray}}
\newenvironment{myeqnarrayF}{%
\addtocounter{equation}{-1}
\refstepcounter{defcounter}
\renewcommand\theequation{F\thedefcounter}
\begin{eqnarray}}
{\end{eqnarray}}
\newenvironment{myeqnarrayG}{%
\addtocounter{equation}{-1}
\refstepcounter{defcounter}
\renewcommand\theequation{G\thedefcounter}
\begin{eqnarray}}
{\end{eqnarray}}
\newenvironment{myeqnarrayH}{%
\addtocounter{equation}{-1}
\refstepcounter{defcounter}
\renewcommand\theequation{H\thedefcounter}
\begin{eqnarray}}
{\end{eqnarray}}
\subsection{A. Pauli operators in arbitrary directions}Pauli operators are well known matrices in physics, generally denoted by $\sigma_x$, $\sigma_y$, $\sigma_z$ or $\sigma_1$, $\sigma_2$ and $\sigma_3$, where
\begin{widetext}
$\sigma_1=\begin{pmatrix}
0 & 1 \\
1 & 0
\end{pmatrix}, 
\sigma_2=\begin{pmatrix}
0 & -i \\
i & 0
\end{pmatrix},
\sigma_3=\begin{pmatrix}
1 & 0 \\
0 & -1
\end{pmatrix}
$.
\end{widetext}
Eigen bases of these operators turn out to be the mutually unbiased bases(MUBs). One can, in principle, rotate the operators in arbitrary directions. For our purpose, without loss of generality, we rotate the matrices such that the 
Eigen vectors of the rotated  $\sigma_3$ in the $(\theta, \phi)$ direction($
\sigma_{3}(\theta, \phi)$) turn out to be a set of arbitrary ortho-normal states 
such as $|0(\theta, \phi)\rangle=\cos(\frac{\theta}{2})|0\rangle+e^{i\phi}
\sin(\frac{\theta}{2})|1\rangle$ and $|1(\theta, \phi)\rangle=\sin(\frac{\theta}{2})|0\rangle-e^{i\phi}\cos(\frac{\theta}{2})|1\rangle$. From the basis of  $\sigma_{3}(\theta, \phi)$, we can now form the other mutually unbiased bases such as 
\begin{eqnarray}
|x\pm(\theta, \phi)\rangle&=&\frac{|0(\theta, \phi)\rangle\pm|1(\theta, \phi)\rangle}{\sqrt{2}}\nonumber\\
|y\pm(\theta, \phi)\rangle&=&\frac{|0(\theta, \phi)\rangle\pm i|1(\theta, \phi)\rangle}{\sqrt{2}}
\end{eqnarray}
In this article, we use these MUBs to perform coherence measurements and optimize over $(\theta, \phi)$ to get the maximum possible violation of the inequalities.
\subsection{B. Steering complementarity relationships}We now derive a set of steering complementarity relations for various steering inequalities for the case of definite causal order. These relations are complementary in the sense that if one of the steering inequalities is violated by a state, its complementary part in the complementarity relation does not violate the corresponding steering inequality.

We consider the quantity $S_{(i,j,k)}$, defined as follows, and show that for any arbitrary quantum state, it is bounded. Since the analysis holds for arbitrary quantum states, the bound cannot be violated by quantum theory.
\begin{myeqnarrayB} 
S_{(i,j,k)}&=&\sum^1_{a,b=0}\sum_{i, j, k}p(a|i)p(b|j)C^{\alpha}_{k}(\rho'_{a|i})C^{\alpha}_{k}(\rho'_{b|j})\nonumber\\&\leq &\frac{1}{2}\sum^1_{a,b=0}\sum_{i, j, k} p(a|i)p(b|j)\left(C^{\alpha^2}_k(\rho'_{a|i})+C^{\alpha^2}_k(\rho'_{b|j})\right)\nonumber\\&\leq&
\sum^1_{a,b=0}\sum_{i, j}p(a|i) p(b|j)\Omega_{\alpha}\nonumber
\\&\leq&
9 \Omega_{\alpha}.
\label{proof3}
\end{myeqnarrayB}

To arrive at the bound in the above Eq. (\ref{proof3}), we use the following facts in the first and second inequality respectively: i). For any two real numbers $x$ and $y$, $xy\leq\frac{x^2+y^2}{2}$ and ii). Coherence complementarity relations from the Eq. (5). 

The quantity $S_{(i,j,k)}$ can be decomposed into several parts for which a set of steering inequalities  can be derived. For example, we consider the following decomposition,
\begin{myequationB}
S_{(i,j,k)}=S_{(i=j,k)}+S_{(i=k\neq j)}+S_{(i\neq j=k)}+S_{(i\neq j\neq k)},
\label{decomposition}
\end{myequationB}
where, the subscript for each term denotes the choice of basis $i$, $j$ and $k$. Below, we show that each term represents a steering inequality and for each term, a steering inequality can be derived. A new steering inequality can be derived by add two or more terms in the decomposition but not all of them together.
\begin{figure}[h!]
\includegraphics[scale=1]{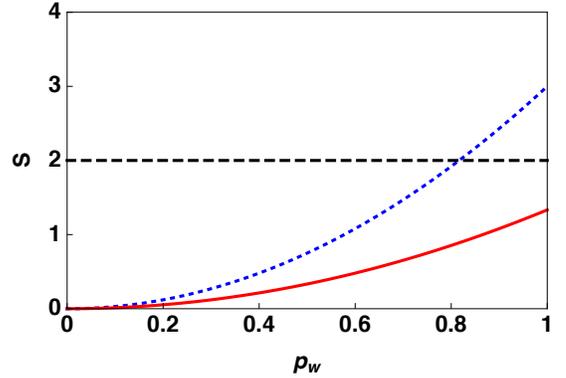}
\caption{We plot $\frac{S_{(i\neq j\neq k)}}{2}$ (Blue, dotted) and $\frac{S_{(i,j,k)}}{9}$ (Red, solid curve) from Eq. (\ref{sdecomp3}) and Eq. (\ref{proof3}) respectively with $p_w$ for Werner state~(15) and $l_1$-norm measure of quantum coherence. Bounds for both of the quantities turn out to be $\Omega_\alpha$. We observe that whereas $\frac{S_{(i\neq j\neq k)}}{2}$ violates the bound for around $p_w= 0.816$, the quantity $\frac{S_{(i,j,k)}}{9}$ does not show the violation for any value of $p_w$.}
\label{sijk_l1}
\end{figure}

In the paper, we have explicitly derived the bound of the quantity $S_{(i\neq j\neq k)}$, while the bound for the rest of the quantities can be derived following the same method. Here we elucidate the various decompositions of $S_{(i,j,k)}$ and derive their bounds following a new method. These bounds are relatively weaker than those found following the method given in the paper. However, one advantage of these new bounds is that sum of these bounds of all the terms in the decomposition gives the bound given by the Eq. (\ref{proof3}). We explicitly calculate the new bound for the quantity $S_{(i=j,k)}$ below, while bounds for the rest of the quantities can be derived following the same method as shown below.

\begin{myeqnarrayB}
&&\sum^1_{a,b=0}\sum_{i=j, k}p(a|i)p(b|i)C^{\alpha}_{k}(\rho'_{a|i})C^{\alpha}_{k}(\rho'_{b|i})\nonumber\\&&\leq \frac{1}{2}\underset{\lambda,\lambda'}{\sum^1_{a,b=0}\sum_{i=j, k}} P_{\lambda}P_{\lambda'}p(a|i,\lambda)p(b|i,\lambda')\nonumber\\ &&\left( C^{\alpha^2}_k(\rho'_{\lambda}) + C^{\alpha^2}_k(\rho'_{\lambda'})\right)\nonumber\\&&\leq 
\underset{\lambda,\lambda'}{\sum^1_{a,b=0}\sum_{i=j}}P_{\lambda}P_{\lambda'}p(a|i,\lambda) p(b|i,\lambda')\Omega_{\alpha}
\leq
3 \Omega_{\alpha}.
\label{proof4}
\end{myeqnarrayB}

\begin{figure}[h!]
\includegraphics[scale=1]{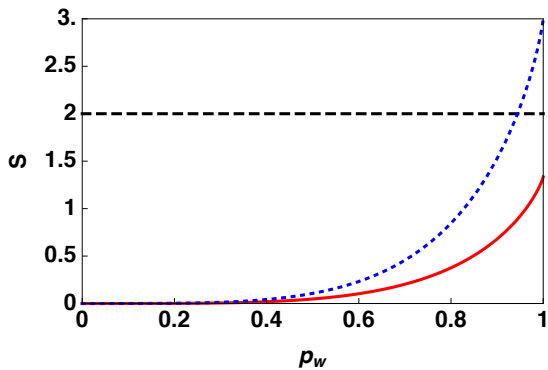}
\caption{We plot $\frac{S_{(i\neq j\neq k)}}{2}$ (Blue, dotted) and $\frac{S_{(i,j,k)}}{9}$ (Red, solid curve) from Eq. (\ref{sdecomp3}) and Eq. (\ref{proof3}) respectively with $p_w$ for Werner state~(15) and relative entropy measure of quantum coherence. Bounds for both of the quantities turn out to be $\Omega_\alpha$. We observe that whereas $\frac{S_{(i\neq j\neq k)}}{2}$ violates the bound for $p_w=0.944$, the quantity $\frac{S_{(i,j,k)}}{9}$ does not show the violation for any value of $p_w$.}
\label{sijk_rec}
\end{figure}

In a similar manner, we obtain,

\begin{myequationB}
S_{(i\neq j=k)}\leq 2 \Omega_{\alpha},
\label{sdecomp1}
\end{myequationB}

\begin{myequationB}
S_{(i=k\neq j)}\leq 2 \Omega_{\alpha},
\label{sdecomp2}
\end{myequationB}

\begin{myequationB}
S_{(i\neq j\neq k)}\leq 2 \Omega_{\alpha}.
\label{sdecomp3}
\end{myequationB}

We plot the behaviour of $S_{(i,j,k)}$ and $S_{(i\neq j\neq k)}$ for varying $p_W$ using $l_1$ norm and relative entropy of coherence in Fig. (\ref{sijk_l1}) and Fig. (\ref{sijk_rec}) respectively. It is observed that while $S_{(i\neq j\neq k)}$ shows the violation for both, $S_{(i,j,k)}$ is always satisfied.

 If a steering inequality corresponding to a term or a particular group of terms appearing in the decomposition in Eq.~(\ref{decomposition}) is violated  for a particular state, the rest of the terms together cannot violate the corresponding steering inequality and in fact compensate for the violation of the former inequality such that $S_{(i,j,k)}$ is always satisfied.

\subsection{D. Tightness of the bound under 1SQI}
We would like to find at least one example of a (set of) quantum state(s) under the paradigm of 1SQI-model, which 
saturates the bound of $3\Omega_{\alpha}$. This will ensure that the bound is tight and no further improvement can be made. Without loss of generality, we assume that the strategy to 
determine the conditional states $\rho'_{a|k}$, which depend on the outcome $a$, $\lambda$ and choice of basis $k$ as follows.

For each choice of basis $k$ in which coherence will be measured, choose $\rho'_{a, \lambda}$ and 
$\rho'_{b, \lambda'}$ to be a pure state which is diagonal in the $i\neq k$ basis.

Let us evaluate the quantity $S$ defined as in the main text for the term $k=1$,
while choosing $\rho'_{a, \lambda} = \rho'_{b,\lambda'}=|0\rangle$ $\forall (a, b)$. Then we have,

\begin{myeqnarrayD}
S_{k=1}&=&\underset{\lambda, \lambda'}{\sum_{a,b=0}^{1}}\sum_{i\neq j=2}^{3}P_\lambda P_{\lambda'}p(a|i,\lambda)
p(b|j,\lambda') C_k^\alpha(\rho'_{a,\lambda})C_k^\alpha(\rho'_{b,\lambda'})\nonumber\\
&=&2\left(\frac{1}{2}+0+\frac{1}{2}+0\right) = 2,
\end{myeqnarrayD}
where for the last equality we have expanded the summation over $a$ and $b$ and used the fact that 
the cases $i=2$, $j=3$ and $i=3$, $j=2$ are symmetric 
and will yield the same results.

A similar analysis for $k=2$ and $k=3$ under the aforementioned strategy for selecting 1SQI states also yields the same 
value. Therefore for at least one set of states we have shown that 
\begin{myequationD}
S=S_{k=1}+S_{k=2}+S_{k=3} = 6,
\end{myequationD}
which completes the required proof. 

It should be noted that a similar prescription for ordered causal models does not yield value of $S$ higher than
$2\Omega_{\alpha}$.
\subsection{E. Generalization of LHS bound}
In this section, we generalize the LHS bound on average local quantum coherence in the $d$-dimension, where $d$ is prime power. We know that the number of MUBs in prime power dimension $d$ is $d+1$. The coherence complementarity relation (considering the definition of coherence to be normalized by the factor of $\frac{1}{d-1}$) for $l_1$-norm turns out to be \cite{hall}
\begin{myequationE}
\sum_{j}C_{j}^{l_1^2}(\rho)\leq \frac{d}{(d-1)}(dP(\rho)-1)\leq d,
\label{ccgen}
\end{myequationE}
where $P(\rho)=\text{Tr}(\rho^2)\leq 1$ is the purity of the state. This inequality can now be used to derive the LHS bound on the quantity for bipartite states with subsystems of prime-power dimensions($d$). To find the bound, we show
\begin{widetext}
\begin{myeqnarrayE}
S&=&\sum^{d-1}_{a,b=0}\sum_{\substack{i, j, k = 1\\ i\neq j\neq k}}^{d+1}p(a|i)p(b|i)C^{l_1}_{k}(\rho'_{a|i})C^{l_1}_{k}(\rho'_{b|j})\overset{LHS}{\leq} \frac{1}{2}{\sum^{d-1}_{\substack{a,b=0\\ \lambda,\lambda'}}\sum_{\substack{i, j, k = 1\\ i\neq j\neq k}}^{d+1}} P_{\lambda}P_{\lambda'}p(a|i,\lambda)p(b|j,\lambda')\left(C^{l_1^2}_k(\rho'_{\lambda})+C^{l_1^2}_k(\rho'_{\lambda'})\right)\nonumber\\&=&\frac{1}{2}{\sum^{d-1}_{\substack{a,b=0\\ \lambda,\lambda'}}\sum_{k=1}^{d+1}\sum_{\substack{m,n=1,\\ m> n}}^{d}}P_{\lambda}P_{\lambda'}\bigg{[}p(a|\mathcal{K}_m^k,\lambda)p(b|\mathcal{K}_n^k,\lambda')+p(b|\mathcal{K}_m^k,\lambda')p(a|\mathcal{K}_n^k,\lambda)\bigg{]}\left(C^{l_1^2}_k(\rho'_{\lambda})+C^{l_1^2}_k(\rho'_{\lambda'})\right)\nonumber\\&\leq &
\frac{1}{2}{\sum_{\substack{\lambda,\lambda'}}\sum_{k=1}^{d+1}}d(d-1)P_{\lambda}P_{\lambda'}\left(C^{l_1^2}_k(\rho'_{\lambda})+C^{l_1^2}_k(\rho'_{\lambda'})\right)
\leq d^2(d-1)\sum_{\lambda,\lambda'}P_{\lambda}P_{\lambda'}
= d^2(d-1),
\label{lhsgen}
\end{myeqnarrayE}
\end{widetext}
where $\mathcal{K}^{k}_{n}=Modulo(k+n-1,d+1)+1$. Here, if the $a$'s and $b$'s in the second line is summed up, each term in the square bracket gives the value one. Similarly, if $m$'s and $n$'s are summed, there will be $d(d-1)/2$ such pairs of terms inside square bracket giving raise to the $d(d-1)$ term in the third line. In the last inequality, we use the coherence complementarity relation as given in Eq. (\ref{ccgen}).

\subsection{F. Generalization of 1SQI bound}
One can similarly, find the bound on the sum of local quantum coherence-squares in the MUBs for 1SQI model for bipartite states with subsystems of prime power dimensions ($d$). As before, we start with the quantity $S$ as 
\begin{widetext}
\begin{myeqnarrayF}
S & = &\underset{\lambda,\lambda'}{\sum_{i\neq j\neq k}^{d+1}\sum_{a,b =0}^{d-1}}P_\lambda P_{\lambda'}p(a|i,\lambda)p(b|j,\lambda')
C_k^{\alpha}(\rho_{a,\lambda})C_k^{\alpha}(\rho_{b,\lambda'})\nonumber\\
& \leq &\frac{d-1}{2}\underset{\lambda,\lambda', i, k = 1}{\sum_{i\neq k}^{d+1}}P_\lambda P_{\lambda'}
\left[\sum_{a,b=0}^{d-1} \bigg{(}p(a|i,\lambda)C_k^{\alpha}(\rho_{b,\lambda'})\bigg{)}^2+
\sum_{a,b=0}^{d-1}\bigg{(}p(b|j,\lambda)C_k^{\alpha}(\rho_{a,\lambda})\bigg{)}^2\right]\nonumber\\ & = & S_1+S_2,
\end{myeqnarrayF}
\end{widetext}
where, we consider, the first term  $S_1$ to be the first term of the second line in the above equation and $S_2$ to be the second. In the above inequality,  we use the fact that for any positive real numbers $p_{i}$, $q_i$, $r_i$ and $s_i$,
\begin{myequationF}
\underset{i, j, k = 1}{\sum_{i\neq j\neq k}^{d+1}} p_i q_j r_k s_k\leq
\frac{d-1}{2}\left[\underset{i, j = 1}{\sum_{i\neq j}^{d+1}} \left( p_i r_j \right)^2+\underset{i, j = 1}{\sum_{i\neq j}^{d+1}}\left(q_i s_j\right)^2\right].
\end{myequationF}
Now, we can easily expand over the $i-$index and show that 
\begin{myeqnarrayF}
S_1&=&\frac{d-1}{2}\underset{\lambda,\lambda', i, k = 1}{\sum_{i\neq k}^{d+1}}P_\lambda P_{\lambda'}
\sum_{a,b=0}^{d-1} \bigg{(}p(a|i,\lambda)C_k^{\alpha}(\rho_{b,\lambda'})\bigg{)}^2\nonumber\\ &=&\frac{d-1}{2}\underset{\lambda,\lambda'}{\sum_{k = 1}^{d+1}}P_\lambda P_{\lambda'}
\sum_{a,b=0}^{d-1} \bigg{(}F(a|k,\lambda)C_k^{\alpha}(\rho_{b,\lambda'})\bigg{)}^2,
\end{myeqnarrayF}
where $F(a|k, \lambda)=\underset{i\neq k}{\sum_{i=1}^{d+1}}p^2(a|i, \lambda)$. One can easily show that $\sum_{a=0}^{d-1}F(a|k, \lambda)\leq \frac{d+1}{2} \hspace{0.1cm}\forall k$.   This implies that $S_1\leq \frac{d(d^2-1)}{2}$ and similarly, one can show that $S_2$ as well bounded from above by $\frac{d(d^2-1)}{2}$, which results, 
\begin{myequationF}
S\leq d(d^2-1)
\end{myequationF}

\end{document}